\newtheorem{theorem}{Theorem}[section]
\newtheorem{lemma}[theorem]{Lemma}
\newtheorem{definition}[theorem]{Definition}
\newcommand{\RR}{\mathbb R}
\newcommand{\partdiff}[2]{\frac{\partial {#1}}{\partial {#2}}}
\newcommand{\secdiff}[2]{\frac{\partial^2 {#1}}{\partial {#2}^2}}
\newcommand{\mixdiff}[3]{\frac{\partial^2 {#1}}{{\partial {#2}}{\partial {#3}}}}
\newcommand{\E}{\mbox{\bf E}}
\newcommand{\cM}{{\cal M}}
\newcommand{\cF}{{\cal F}}
\newcommand{\cR}{{\cal R}}
\def\b1{{\bf 1}}
\def\bg{{\bf g}}
\def\bg{{\bf g}}
\def\bx{{\bf x}}
\def\by{{\bf y}}
\title{An approximately truthful-in-expectation mechanism for
  combinatorial auctions using value queries} 
\author{Shaddin Dughmi\thanks{Microsoft Research, Redmond, WA} \and Tim Roughgarden\thanks{Stanford University, Stanford, CA} \and Jan Vondr\'ak\thanks{IBM Almaden Research Center, San Jose, CA} \and Qiqi Yan\thanks{Stanford University, Stanford, CA}}
\begin{document}

\maketitle

\section{Introduction}

This manuscript presents an alternative implementation of the
truthful-in-expectation (TIE) mechanism of Dughmi, Roughgarden and Yan
\cite{DRY11} for combinatorial auctions. Recall that in a {\em
  combinatorial auction}, $m$ goods get allocated to $n$
bidders.  Each bidder~$i$ has a private valuation~$v_i$ that describes
its value~$v_i(S)$ for each subset~$S$ of goods.  The social welfare
of an allocation is the sum of the bidders' values for the goods
received.

The mechanism of \cite{DRY11}
is presented in a ``lottery-value" oracle model, where each bidder
can be queried about his valuation by means of the following query:
given a vector of probabilities over items $\bx \in [0,1]^m$, what is
the expected value $\E[v_i(\hat{\bx})]$, where $\hat{\bx}$ is obtained
by independently rounding each coordinate of $\bx$ to $0$ or $1$ with
probability $x_i$. Such queries can be answered efficiently for
certain valuation functions (in particular {\em coverage} functions),
and this oracle model is a convenient framework for the presentation
of the mechanism of \cite{DRY11}. On the other hand, lottery-value
queries are \#P-hard to answer for the class of matroid rank functions
(see Section~\ref{sec:SharpP}), and hence one can ask how realistic
this model is in general. The purpose of this manuscript is to show that
the model is ``approximately realistic'' in the sense that the mechanism
of \cite{DRY11} can be implemented in the (weaker) value oracle model
at the cost of relaxing the concept of truthfulness in expectation to
{\em approximate truthfulness in expectation}. (Here, we mean
appoximation within an arbitrarily small error, in the sense of an
FPTAS.) 
In particular, we show that the maximal in distributional range (MIDR)
allocation rule of \cite{DRY11} can be implemented as an approximately MIDR
allocation rule in the value oracle model, and then we present a
blackbox transformation from approximately MIDR allocation rules to
approximately TIE mechanisms.

First, let us define the approximate variants of MIDR and truthfulness
in expectation. The exact variants are obtained by setting
$\epsilon=0$. 

\begin{definition}
An allocation rule is {\em $(1-\epsilon)$-approximately
  maximal-in-distributional range} (or {\em $(1-\epsilon)$-MIDR}) if
  there is a range of distributions 
over outcomes $\cR$ such that, for every input,
the mechanism returns an outcome that is sampled from a distribution
$D^* \in \cR$ that $(1-\epsilon)$-approximately maximizes the expected
social welfare 
$\E_{\omega \sim D}[\sum_i v_i(\omega)]$ over all distributions $D \in
\cR$. 
\end{definition}

We define approximate truthfulness in expectation in terms of {\em
  relative utility error}. This means that truth-telling costs a
bidder at most an $\epsilon$ fraction of his maximum-possible utility. 

\begin{definition}
A mechanism with allocation and payment rules $A$ and $p$ is
{\em $(1-\epsilon)$-approximately truthful-in-expectation} (or
{\em $(1-\epsilon)$-TIE}) if, for every bidder~$i$, (true) valuation
function~$v_i$, (reported) valuation function~$v'_i$, and (reported)
valuation functions~$v_{-i}$ of the other bidders,
\begin{equation}\label{eq:approx-truthful}
\E[v_i(A(v_i,v_{-i})) - p_i(v_i,v_{-i})] \geq (1-\epsilon) \E[
  v_i(A(v'_i,v_{-i})) - p_i(v'_i,v_{-i})]. 
\end{equation}
The expectation in~\eqref{eq:approx-truthful} is over the coin flips
of the mechanism.  
\end{definition}

The class of valuations of interest here is the following (as in \cite{DRY11}).

\begin{definition}
\label{def:WMRS}
A function $v:2^{[m]} \rightarrow \RR_+$ is a {\em weighted matroid rank
sum} if there are matroids $\cM_1,\ldots,\cM_k$ 
and weights $\alpha_1,\ldots,\alpha_k \geq 0$ such that
$$ v(S) = \sum_{i=1}^{k} \alpha_i r_{\cM_i}(S), $$
where $r_{\cM_i}$ is the rank function of matroid $\cM_i$.
\end{definition}
This definition also captures positive combinations of {\em weighted
rank functions}, as the cones generated by 
weighted and unweighted rank functions of matroids coincide. 

The allocation rule of \cite{DRY11} is maximal-in-distributional-range,
provides a $(1-1/e)$-approximation to the social welfare, 
and the corresponding TIE mechanism can be implemented in expected
polynomial time provided the bidders' valuations are weighted
matroid rank sums and support lottery-value queries. Our goal here is
to prove the following. 

\begin{theorem}
\label{thm:approx-TIE}
For every $\epsilon = {1}/{poly(m,n)}$, there is a 
$(1-\epsilon)$-TIE
mechanism that achieves a $(1-1/e-\epsilon)$-approximation to the
social welfare in
combinatorial auctions with weighted-matroid-rank-sum valuations 
that runs in polynomial time in the value oracle model. 
\end{theorem}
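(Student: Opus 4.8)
The plan is to observe that the mechanism of \cite{DRY11} accesses the bidders' valuations \emph{only} through lottery-value queries: its maximal-in-distributional-range allocation rule runs a polynomial-time continuous-optimization procedure, rounds the resulting fractional point of the polytope of feasible fractional allocations to a distribution over integral allocations via a \emph{valuation-oblivious} rounding scheme $r$, and charges VCG-type payments; in all of this each $v_i$ enters only through quantities of the form $\E[v_i(\hat\bx)]$ and maxima of sums of such quantities over the range $\cR=\{r(\bx)\}$. It therefore suffices to (i) simulate a lottery-value query by sampling with value queries, (ii) show that the resulting perturbed allocation rule is still $(1-\epsilon)$-MIDR and $(1-1/e-\epsilon)$-approximate, and (iii) feed it into our blackbox transformation from approximately-MIDR allocation rules to approximately-TIE mechanisms, computing its payments by sampling as well.

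For (i): to estimate $\E[v_i(\hat\bx)]$ at a point $\bx\in[0,1]^m$, draw $N$ independent rounded sets $\hat\bx^{(1)},\dots,\hat\bx^{(N)}$, query $v_i(\hat\bx^{(t)})$ with the value oracle, and return the average. Since $0\le v_i(\hat\bx^{(t)})\le v_i([m])\le V_{\max}$, where $V_{\max}:=\max_i v_i([m])$, Hoeffding's inequality shows that $N=O(\delta^{-2}\log(1/\eta))$ value queries give additive error at most $\delta V_{\max}$ with failure probability $\eta$; since $\epsilon=1/poly(m,n)$ and the allocation and payment rules of \cite{DRY11} issue only polynomially many lottery-value queries, taking $\delta=1/poly(m,n)$ and $\eta=2^{-poly(m,n)}$ keeps $N$ polynomial. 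The procedure of \cite{DRY11} chooses its query points adaptively, but conditioned on any history the next estimate is within $\delta V_{\max}$ of its true value with probability $\ge 1-\eta$, so a union bound over the polynomially many queries shows that \emph{all} estimates are simultaneously accurate with probability $1-2^{-poly(m,n)}$; on that event a standard robustness argument (routine for polynomial-time procedures fed a noisy value oracle) shows the whole procedure behaves as with exact answers, losing at most an additive $poly(m,n)\cdot\delta\cdot V_{\max}$ in objective value.

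For (ii): write $W(\bx)=\E_{\omega\sim r(\bx)}[\sum_i v_i(\omega)]$ for the objective over the fixed range $\cR$. On the good event the perturbed procedure returns a point $\bx'$ with $W(\bx')\ge\max_{\bx}W(\bx)-\delta'V_{\max}$, where $\delta'=poly(m,n)\cdot\delta$. Because the range of \cite{DRY11} already contains a distribution of expected welfare $\ge(1-1/e)\,OPT\ge(1-1/e)V_{\max}$, we have $\max_{\bx}W(\bx)\ge(1-1/e)V_{\max}$, so a small enough inverse-polynomial choice of $\delta'$ gives $W(\bx')\ge(1-\epsilon)\max_{\bx}W(\bx)$; on the negligible-probability failure event the rule instead outputs a fixed member of $\cR$ (say the empty allocation), which perturbs the expected welfare by at most $2^{-poly(m,n)}\cdot nV_{\max}$ and is absorbed. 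Hence the perturbed rule is $(1-\epsilon)$-MIDR with range $\cR$, and since $\max_{\bx}W(\bx)\ge(1-1/e)\,OPT$ it is $(1-1/e-\epsilon)$-approximate; the sampling, the continuous-optimization step, and the rounding are all polynomial.

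For (iii), we invoke our blackbox transformation, whose payments again refer to the valuations only through quantities of the form $\E[v_j(\hat\bx)]$ and maxima of sums thereof over $\cR$, and are therefore estimable by sampling exactly as in (i) --- the extra $1/poly(m,n)$ error being absorbed into $\epsilon$. I expect the conceptual crux to lie here: in keeping every bound \emph{multiplicative} once Monte-Carlo noise is propagated through the (non-concave) optimization and through the payments. For the allocation rule, the bound $\max_{\bx}W(\bx)=\Omega(V_{\max})$ is precisely what lets additive sampling error be charged against a constant fraction of the optimum; for the payments, the transformation must arrange that each bidder's expected utility equals the optimized social-welfare objective minus a term independent of her report (up to a $1/poly(m,n)$ error), so that the $(1-\epsilon)$-MIDR property yields the \emph{relative} utility-error inequality~\eqref{eq:approx-truthful} and not merely an additive one.
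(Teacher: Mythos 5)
Your high-level decomposition matches the paper exactly: the paper proves Theorem~\ref{thm:approx-TIE} by combining Theorem~\ref{thm:approx-MIDR} (an approximately-MIDR allocation rule in the value-oracle model) with Theorem~\ref{thm:MIDR->TIE} (the blackbox transformation to approximate TIE). Your steps (i)--(ii) correspond to Theorem~\ref{thm:approx-MIDR} and your step (iii) to Theorem~\ref{thm:MIDR->TIE}, and your observation that $\max_{\bx} W(\bx) \geq (1-1/e)\,OPT \geq (1-1/e)\,V_{\max}$ is precisely how one converts additive sampling error into a multiplicative $(1-\epsilon)$ factor.

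However, there is a genuine gap in step (ii). You assert that ``a standard robustness argument (routine for polynomial-time procedures fed a noisy value oracle) shows the whole procedure behaves as with exact answers, losing at most an additive $poly(m,n)\cdot\delta\cdot V_{\max}$ in objective value.'' This is not routine, and the paper explicitly does \emph{not} take this route. The DRY11 allocation rule solves $\max\{F^{exp}(\bx): \bx \in P\}$ via convex-optimization machinery whose guarantees assume access to exact (or lottery-value-oracle-grade) evaluations of $F^{exp}$ and its gradient; it is not clear that those methods degrade gracefully when every gradient coordinate carries independent Monte-Carlo noise, nor that the output remains a distribution in the fixed range $\cR$ after such perturbation (which is required for the ``same-range'' half of the $(1-\epsilon)$-MIDR definition). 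The paper's Section~2 resolves this by \emph{replacing} the convex-optimization step with a simple gradient-ascent local search over $P$, with step size $\delta = \epsilon/(8m^2n^2)$, and then proves from scratch that this procedure tolerates additive gradient error $\delta M$: it uses concavity of $F^{exp}$ plus bounds of the form $|\partial^2 F^{exp}/\partial x_{ij}\partial x_{i'j'}| \leq M$ (which follow from properties of the multilinear extension) to show that each iteration increases $F^{exp}$ by at least $\Omega(\epsilon^2 M/(m^2n^2))$ and hence the search terminates in polynomially many steps at a $(1-\epsilon)$-approximate optimum. That robustness analysis --- the explicit choice of a noise-tolerant optimizer and the second-derivative bookkeeping --- is the technical content you are missing; without it the claim that the perturbed rule is $(1-\epsilon)$-MIDR does not follow.

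Your step (iii) correctly identifies the conceptual crux (converting additive estimation error into relative utility error), but you defer the construction to ``our blackbox transformation.'' For the record, that transformation (the mechanism $\mathcal{M}'$ of Section~\ref{se:MIDR->TIE}) is itself nontrivial: it declares bidders \emph{relevant}/\emph{active} based on thresholds in $V_i$ and in the estimates $O, O'_{-i}$, charges VCG-like prices only to active bidders, and with probability $1/n^2$ hands a bidder the entire ground set --- precisely so that every bidder's expected utility is $\Omega(V^*_i/n^2)$ and the $poly(n)\cdot\epsilon\cdot V^*_i$ additive estimation error can be charged against it multiplicatively. If you are invoking Theorem~\ref{thm:MIDR->TIE} as a black box, that is legitimate, but your parenthetical that the payments ``refer to the valuations only through quantities of the form $\E[v_j(\hat\bx)]$'' understates what is needed; the transformation is designed around relative error control, not around making the payments oracle-computable.
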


We prove this claim in two steps. First, we prove the following.

\begin{theorem}
\label{thm:approx-MIDR}
For every $\epsilon = {1}/{poly(m,n)}$, there is a 
$(1-\epsilon)$-MIDR allocation rule that achieves a
$(1-1/e-\epsilon)$-approximation to the social welfare in
combinatorial auctions with weighted-matroid-rank-sum bidders and runs
in polynomial time in the value oracle model. 
\end{theorem}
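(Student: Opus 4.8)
The plan is to reuse the allocation rule of \cite{DRY11} almost verbatim and to show that its one ingredient that is \#P-hard in the value oracle model --- the lottery-value query --- can be simulated using polynomially many ordinary value queries, at the price of the $\epsilon$-slack in the statement. Recall the structure we reuse: the rule of \cite{DRY11} fixes a polytope $P\subseteq[0,1]^{nm}$ of fractional allocations with an efficient separation oracle, together with an efficiently samplable rounding scheme $\rho$ sending $\bx\in P$ to a random integral allocation; the range of distributions is the (convex hull of the) set $\{\rho(\bx):\bx\in P\}$, which does \emph{not} depend on the reported valuations. For weighted-matroid-rank-sum valuations, \cite{DRY11} show that $G(\bx):=\E_{\omega\sim\rho(\bx)}[\sum_i v_i(\omega)]$ is concave on $P$, that $\max_{\bx\in P}G(\bx)\ge(1-1/e)\,\mathrm{OPT}$ with $\mathrm{OPT}$ the optimal integral social welfare, and that $G$ and a supergradient of $G$ can be evaluated at any point using polynomially many lottery-value queries to the $v_i$. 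Their (essentially) exact-MIDR rule maximizes $G$ over $P$ by convex programming and outputs a sample from $\rho(\bx^*)$; the lottery-value queries are its only non-value-oracle operations, so it suffices to simulate them.

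First I would estimate $G$ rather than compute it. Since $\rho$ is samplable, for any $\bx\in P$ I draw $N$ independent allocations $\omega^{(1)},\dots,\omega^{(N)}\sim\rho(\bx)$ and return $\widetilde G(\bx)=\frac1N\sum_{t}\sum_i v_i(\omega^{(t)}_i)$, spending $nN$ value queries; a supergradient is estimated the same way, from the lottery-value formulas of \cite{DRY11} (equivalently, from the per-coordinate sampling identities for partial derivatives of a multilinear extension). As each $v_i(\omega^{(t)}_i)\in[0,v_i([m])]$, Hoeffding's inequality gives $|\widetilde G(\bx)-G(\bx)|\le\eta\sum_i v_i([m])$ with probability $\ge1-p$ once $N=O(\eta^{-2}\log(1/p))$, \emph{uniformly} in the query point $\bx$, and likewise for each supergradient coordinate. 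To turn this into a useful relative error I would first query $v_i([m])$ for every $i$ ($n$ value queries) and use that allocating all goods to one bidder is feasible, so $\mathrm{OPT}\ge\max_i v_i([m])\ge\frac1n\sum_i v_i([m])$; thus the error above is at most $\eta n\cdot\mathrm{OPT}$. Choosing $\eta$ a sufficiently small inverse polynomial in $m$, $n$ and $1/\epsilon$, with $\log(1/p)$ polynomial so that $p$ may be as small as $2^{-\mathrm{poly}(m,n)}$, makes $N=\mathrm{poly}(m,n,1/\epsilon)$ and the error an arbitrarily small $1/\mathrm{poly}$ multiple of $\mathrm{OPT}$.

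Next I would run the convex-programming step of \cite{DRY11} for $\max_{\bx\in P}G$ using these noisy value and supergradient oracles. The polytope $P$ has polynomial diameter and $G$ is $\mathrm{poly}(m,n)\cdot\max_i v_i([m])$-Lipschitz (each partial derivative of $G$ is a marginal value), so standard guarantees for first-order convex optimization with approximate oracles return a point $\bx^*$ with $G(\bx^*)\ge\max_{\bx\in P}G(\bx)-\mathrm{poly}(m,n)\cdot(\text{oracle error})$; taking $\eta$ a further $1/\mathrm{poly}$ factor smaller makes this gap at most $\tfrac{1}{2}\epsilon(1-1/e)\,\mathrm{OPT}$. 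Combined with $\max_{\bx\in P}G\ge(1-1/e)\,\mathrm{OPT}$, the point $\bx^*$ is then a $(1-\tfrac{\epsilon}{2})$-approximate maximizer of expected welfare over the range and attains at least $(1-1/e-\tfrac{\epsilon}{2})\,\mathrm{OPT}$. The mechanism outputs a sample from $\rho(\bx^*)$; the range is exactly the valuation-independent set of \cite{DRY11}, and the whole procedure uses only value queries and runs in polynomial time. Modulo the small probability that the estimates were inaccurate, this already gives a polynomial-time $(1-\epsilon)$-MIDR rule with a $(1-1/e-\epsilon)$-approximation to social welfare.

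The step needing the most care, and the only real obstacle, is exactly that caveat: $\bx^*$ is random, whereas $(1-\epsilon)$-MIDR asks the \emph{realized} output distribution to be near-optimal within the fixed range. The bounds on $\bx^*$ hold on the event $E$ that all of the polynomially many (possibly adaptively chosen) queries were $\eta$-accurate, and by a union bound $\Pr[E]\ge1-\mathrm{poly}(m,n)\,p\ge1-2^{-\mathrm{poly}(m,n)}$. Taking the range to be the convex hull of $\{\rho(\bx):\bx\in P\}$, the mechanism's overall output distribution on any fixed input --- the mixture over its internal coins of the distributions $\rho(\bx^*)$ --- lies in the range; on $\overline E$ its welfare is merely nonnegative, and on $E$ it is within $\tfrac{1}{2}\epsilon(1-1/e)\,\mathrm{OPT}$ of $\max_{\bx\in P}G$, so since $\Pr[\overline E]\ll\epsilon$ this output distribution is a $(1-\epsilon)$-approximate maximizer of expected welfare over the range, which is precisely what $(1-\epsilon)$-MIDR requires. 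A genuinely zero-failure version would need a \emph{deterministic} approximation of the lottery-values $\E[v_i(\hat\bx)]$, which is the \#P-hard computation that the value oracle model cannot perform; that is the unavoidable price of passing from lottery-value to value queries.
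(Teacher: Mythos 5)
Your proposal is correct in substance and follows essentially the same route as the paper: both replace lottery-value queries by Monte Carlo estimation from value queries and then do first-order convex optimization of the (concave) Poisson-rounded welfare $F^{exp}=G$ over the polytope $P$. The main difference is how the optimization step is discharged. You invoke ``standard guarantees for first-order convex optimization with approximate oracles'' as a black box; the paper instead writes down and analyzes from scratch a specific Frank--Wolfe-style local search (move $\bx \mapsto \bx + \delta(\by-\bx)$ along an estimated linearization direction, with $\delta=\epsilon/(8m^2n^2)$), proving termination in $O(m^3n^2/\epsilon^2)$ iterations and a $(1-\epsilon)$-approximate optimum via concavity plus bounds on the Hessian of the multilinear extension. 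The paper's choice buys self-containedness and explicit constants; yours is shorter but would need you to identify and verify a concrete optimization result tolerant of adaptive noisy gradients, including the Lipschitz/smoothness parameters you only gesture at. One place where you are actually more careful than the paper: your final paragraph about folding the sampling failure event into the MIDR definition (the mechanism's output distribution is a mixture over internal coins and lies in the convex hull of the range, so a $2^{-\mathrm{poly}}$ failure probability only costs a negligible fraction of expected welfare) spells out a point the paper handles with an informal ``with high probability.'' This is a useful clarification and consistent with the paper's intent.
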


Then, we prove the following general reduction.

\begin{theorem}
\label{thm:MIDR->TIE}
For every $(1-\epsilon)$-MIDR allocation rule that is a
$c$-approximation to the 
social welfare in combinatorial auctions with bidders' valuations
restricted to a set $\cal C$, there is a $(1-\epsilon')$-TIE mechanism
that is a $(c-1/poly(m,n))$-approximation to the social welfare in 
combinatorial auctions with bidders' valuations in~$\cal C$, where
$\epsilon'=\epsilon \cdot poly(m,n)$.
\end{theorem}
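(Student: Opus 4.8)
The plan is to attach VCG-style (Clarke-pivot) payments to the given $(1-\epsilon)$-MIDR rule $A$, and then to bound all the errors introduced by the fact that, in the value-oracle model, the relevant payment quantities can only be estimated. Let $\cR$ be the distributional range witnessing that $A$ is $(1-\epsilon)$-MIDR, and write $D(v)\in\cR$ for the distribution $A$ samples from on a reported profile $v=(v_i,v_{-i})$. The natural payment for bidder $i$ is $p_i(v) = \bigl[\max_{D\in\cR}\sum_{j\neq i}\E_{\omega\sim D}[v_j(\omega)]\bigr] - \sum_{j\neq i}\E_{\omega\sim D(v)}[v_j(\omega)]$. The crucial point is that the first (``pivot'') term does not depend on bidder $i$'s report: choosing a report $v'_i$ amounts to choosing a feasible $D(v'_i,v_{-i})\in\cR$ so as to maximize $\E_{\omega\sim D(v'_i,v_{-i})}[\sum_j v_j(\omega)]$ (with true $v_i$, reported $v_{-i}$) minus a constant, and when $i$ reports truthfully $D(\cdot)$ is a $(1-\epsilon)$-approximate maximizer of precisely this objective over $\cR$. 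Making this precise yields, for every misreport $v'_i$, an inequality of the form $\E[u_i(v_i,v_{-i})]\ \ge\ (1-\epsilon)\E[u_i(v'_i,v_{-i})] - O(\epsilon)\cdot\mathrm{OPT}(v)$, where $u_i$ denotes (expected) utility and $\mathrm{OPT}(v)=\max_{D\in\cR}\E_{\omega\sim D}[\sum_j v_j(\omega)]$; this is approximate truthfulness up to an additive slack proportional to $\epsilon\cdot\mathrm{OPT}$.

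Next I would deal with the fact that the payment formula involves the quantities $\E_{\omega\sim D}[v_j(\omega)]$, which for weighted matroid rank sums in the relevant range are \#P-hard to evaluate exactly (cf.\ Section~\ref{sec:SharpP}). These are estimated by averaging $v_j$ over $\mathrm{poly}(m,n,1/\delta)$ independent samples $\omega\sim D$ and applying a Chernoff/Hoeffding bound, so that with probability $1-2^{-\mathrm{poly}(m,n)}$ every such estimate is within additive $\delta$ of its true value; the pivot term $\max_{D\in\cR}\sum_{j\neq i}\E_{\omega\sim D}[v_j(\omega)]$ is obtained, up to a $(1-\epsilon)$ factor, simply by running $A$ on the profile in which $v_i$ is replaced by the identically-zero valuation. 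All of these estimates remain independent of $v'_i$, so the incentive argument above survives with only an extra additive $\mathrm{poly}(m,n)\cdot\delta$ term in the slack; choosing $\delta$ an appropriate inverse polynomial, and folding the same sampling into the allocation itself, degrades the social-welfare guarantee from $c$ to $c-1/\mathrm{poly}(m,n)$, which accounts for that loss in the statement.

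The last step — converting the additive slack into the relative form of~\eqref{eq:approx-truthful} — is the main obstacle. After rescaling all valuations by a common constant (which changes nothing) we may assume $\max_{j,S}v_j(S)\le 1$, so that $\mathrm{OPT}(v)\le n$ and every bidder's attainable utility lies in a bounded range; I would also discretize the reported valuations onto an additive $1/\mathrm{poly}(m,n)$ grid (a further $1/\mathrm{poly}$ welfare loss, already absorbed above) and subtract a small inverse-polynomial discount from each payment so that truth-telling is exactly individually rational. A case analysis on whether $\E[u_i(v'_i,v_{-i})]$ exceeds an inverse-polynomial threshold then converts the additive slack $O(\epsilon)\cdot\mathrm{OPT}\le O(\epsilon)\cdot n$ into a multiplicative loss $\epsilon'=\epsilon\cdot\mathrm{poly}(m,n)$: for reports giving non-negligible utility the additive bound is immediately absorbed, and for reports giving negligible utility one uses individual rationality of the discounted payments together with the fact that a bidder whose marginal contribution to the welfare is itself tiny cannot gain much from any report. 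The delicate point throughout is exactly this regime — a bidder whose true marginal welfare contribution is much smaller than $\mathrm{OPT}$ — and the bookkeeping that ensures that none of the three independent error sources (the $(1-\epsilon)$ slack inherited from $A$, the sampling error in the payments, and the bid discretization) contributes more than the claimed polynomial factor there.
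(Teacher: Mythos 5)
Your first two paragraphs are fine: with Clarke-pivot payments and a $(1-\epsilon)$-MIDR rule, the truth-telling utility is within additive $O(\epsilon)\cdot\mathrm{OPT}$ of the best achievable utility, and polynomial sampling with Chernoff bounds handles the \#P-hardness of evaluating the expected-value terms. This is also the starting point of the paper's argument. The genuine gap is in the third paragraph, where you must turn the additive slack $O(\epsilon)\cdot\mathrm{OPT}$ into the multiplicative guarantee of~\eqref{eq:approx-truthful}. The problematic regime you correctly identify --- a bidder whose attainable utility is much smaller than $\mathrm{OPT}$ --- is not actually handled by your proposed fixes. Rescaling valuations so that $\max_{j,S}v_j(S)\le 1$ does nothing, because the difficulty is heterogeneity across bidders: after rescaling, one bidder's maximum utility can still be exponentially smaller than $\mathrm{OPT}$, while the additive error term $O(\epsilon n)$ is inverse-polynomial, so the relative error is unbounded. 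Discretizing the \emph{reported} valuations doesn't help either (it changes the class $\cal C$ and cannot raise a bidder's \emph{true} utility), and a ``discount'' subtracted from the payment to ensure individual rationality only makes the truthful utility nonnegative, it does not lower-bound it away from zero at a scale comparable to the additive error. ``A bidder whose marginal contribution is tiny cannot gain much from any report'' is the conclusion you need, not something you have established; in fact without further modification such a bidder can gain a constant factor.

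The missing ingredient is a per-bidder utility floor. The paper's mechanism allocates the \emph{entire ground set} to each bidder with probability roughly $1/n^2$, which guarantees expected utility at least $\Theta(V^*_i/n^2)$ for every bidder; it also classifies bidders as ``relevant'' only when $V_i$ is at least a $1/n^7$ fraction of $\sum_{j\neq i}V_j$, and as ``active'' only when a VCG-style surplus test is passed, excluding the rest from the VCG payments entirely. These two thresholds ensure that for active bidders all estimation and optimization errors scale with $V^*_i$ (Lemma~\ref{lem:err-bound}), so the $O(\epsilon\cdot\mathrm{poly}(n)\cdot V^*_i)$ additive slack is dominated by the $\Omega(V^*_i/n^2)$ utility floor, yielding relative error $O(\epsilon n^9)$ (Lemma~\ref{lem:big-bidder}); and irrelevant bidders are shown to gain only an exponentially small amount by lying (Lemma~\ref{lem:small-bidder}). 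Without some mechanism-level device of this kind to bound every bidder's truthful utility from below at a scale matching the error terms, the additive-to-multiplicative conversion you sketch does not go through.
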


Therefore, by selecting a sufficiently (polynomially) small $\epsilon$
in Theorem~\ref{thm:approx-MIDR}, we can achieve an arbitrarily
(polynomially) small error $\epsilon$ in
Theorem~\ref{thm:approx-TIE}. 

\section{An Approximately MIDR Allocation Rule}

In this section, we prove Theorem~\ref{thm:approx-MIDR}.
We use a variant of the mechanism of \cite{DRY11}. Instead of
sophisticated convex optimization techniques, which seem necessary to
find the exact optimum over the range, we use a simple local search
that guarantees that we get arbitrarily close to the optimum. 
We begin with some definitions. 

\begin{definition}
For a combinatorial auction with $m$ items and $n$ bidders with
valuations $v_i:2^{[m]} \rightarrow \RR_+$, the {\em aggregate
valuation function} $f:2^{[n] \times [m]}  \rightarrow \RR_+$ is 
$$ f(S) = \sum_{i=1}^{n} v_i(\pi_i(S)), $$
where $\pi_i(S) = \{ j: (i,j) \in S\}$. We define $F:[0,1]^{[n]
  \times [m]}  \rightarrow \RR_+$ to be the multilinear extension of
$f$ (see also~\cite{V08}),
and $P$ to be the polytope of fractional allocations:
$$ P = \left\{ \bx \in [0,1]^{[n] \times [m]}: \forall j: \, \sum_{i=1}^{n}
x_{ij} \leq 1 \right\}.$$
\end{definition}

The (integral) welfare maximization problem turns out to be equivalent
to $\max \{F(\bx): \bx \in P\}$.
This problem cannot be solved optimally, even for very
special classes of valuation functions.
In lieu of $F$, the authors of \cite{DRY11} use the
modified objective function 
$$ F^{exp}(x_{11}, x_{12}, \ldots, x_{nm}) = F(1-e^{-x_{11}}, 1- e^{-x_{12}},
\ldots, e^{-x_{nm}}).$$
Interestingly, the function $F^{exp}$ turns out to be {\em concave}
for a subclass of submodular functions, including weighted matroid
rank sums (see \cite{DRY11} for a proof).
This means that we can solve the problem $\max \{F^{exp}(\bx): \bx \in P\}$,
which means in effect optimizing over a certain
range of product distributions. Also, the optimum of this problem is
at least $(1-1/e)$ times the optimal social welfare.
Supplementing this MIDR allocation rule with suitable payments yields
a $(1-1/e)$-approximate, TIE mechanism~\cite{DRY11}. 

Here, we propose the following simple algorithm that solves the
problem $\max \{F^{exp}(\bx): \bx \in P\}$ 
near-optimally (in the sense of an FPTAS). 

\paragraph{Local Search Allocation Rule.}
\begin{itemize}
\item Initialize $\bx:=0$. Let $M$ be the maximum value of any singleton.
\item Let $\bg$ be an estimate of the gradient $\nabla F^{exp}(\bx)$, within additive error $\delta M$ in each coordinate,
where $\delta = \frac{\epsilon}{8 m^2 n^2}$.
As long as there is a point $\by \in P$ such that
$$ (\by - \bx) \cdot \bg > \frac12 \epsilon M, $$
update $\bx := \bx + \delta (\by - \bx)$.
\item Return an allocation randomly sampled from the distribution $\bx$.
\end{itemize} 

The required estimates of $\nabla F^{\exp}(\bx)$ can be
obtained in polynomial time by random sampling, with high
probability. (By {\em high   probability}, we mean $1 -
e^{-poly(m,n)}$ in this manuscript.  The coordinates of $\nabla
F^{exp}(\bx)$ are always in the interval $[0,M]$ --- by submodularity
--- and so this follows from standard Chernoff bounds
\cite{AlonSpencer}.)  Linear programming can be used to efficiently
find a suitable point~$\by$, or certify that no such point exists.

\subsection{The analysis}

We claim that this allocation rule runs in polynomial time and solves
the problem $\max \{F^{exp}(\bx): \bx \in P\}$ up to a $(1-\epsilon)$
factor with high probability, thus proving Theorem~\ref{thm:approx-MIDR}. 
In the following, we assume that the estimate
of $\nabla F^{exp}(\bx)$ obtained in each step is accurate within additive error $\delta M$,
which is possible to achieve with high probability over the run of the algorithm, via a polynomial number of samples.

We proceed via a series of claims.

\begin{lemma}
If the algorithm terminates, then with high probability
$$ F^{exp}(\bx) \geq (1-\epsilon) \max \{ F^{exp}(\bx): \bx \in P \}.$$
\end{lemma}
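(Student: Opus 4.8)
The plan is to argue that the termination condition of the local search already certifies near-optimality of $\bx$, by exploiting concavity of $F^{exp}$. Suppose the algorithm halts at a point $\bx$, and let $\bx^*$ be a maximizer of $F^{exp}$ over $P$. By concavity of $F^{exp}$ (which holds for weighted matroid rank sums, as recalled from \cite{DRY11}), we have the first-order bound
\begin{equation*}
F^{exp}(\bx^*) - F^{exp}(\bx) \leq \nabla F^{exp}(\bx) \cdot (\bx^* - \bx).
\end{equation*}
The termination condition says that no $\by \in P$ satisfies $(\by - \bx) \cdot \bg > \tfrac12 \epsilon M$, where $\bg$ is our gradient estimate; in particular this applies to $\by = \bx^*$, so $(\bx^* - \bx) \cdot \bg \leq \tfrac12 \epsilon M$. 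Since $\bg$ is within additive error $\delta M$ of $\nabla F^{exp}(\bx)$ in each of the $mn$ coordinates, and $\bx^* - \bx \in [-1,1]^{[n]\times[m]}$, replacing $\bg$ by the true gradient changes the inner product by at most $mn \cdot \delta M$. Hence
\begin{equation*}
F^{exp}(\bx^*) - F^{exp}(\bx) \leq \nabla F^{exp}(\bx) \cdot (\bx^* - \bx) \leq \tfrac12 \epsilon M + mn\,\delta M.
\end{equation*}
With $\delta = \tfrac{\epsilon}{8 m^2 n^2}$ the second term is at most $\tfrac{\epsilon M}{8mn} \leq \tfrac14 \epsilon M$, so $F^{exp}(\bx^*) - F^{exp}(\bx) \leq \epsilon M$.

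It then remains to convert this additive guarantee of $\epsilon M$ into the claimed multiplicative guarantee $F^{exp}(\bx) \geq (1-\epsilon)F^{exp}(\bx^*)$. Here I would observe that the optimum $F^{exp}(\bx^*)$ is at least $M$: allocating a single best item fully to its best bidder — i.e. setting one coordinate to the value that makes $1 - e^{-x_{ij}}$ equal to $1$, or more simply comparing with the fractional point achieving $F$-value $M$ — is feasible in $P$ and already yields objective value at least a constant times $M$; in fact, since $F^{exp}$ restricted to a single coordinate direction recovers at least $(1-1/e)$ of the corresponding $F$ value, a little care shows $F^{exp}(\bx^*) \geq (1-1/e)M$. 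To get the clean bound one rescales: the additive slack is $\epsilon M$, and dividing through by $F^{exp}(\bx^*) \geq \Omega(M)$ gives $F^{exp}(\bx) \geq (1 - O(\epsilon))F^{exp}(\bx^*)$; absorbing the constant into the polynomial defining $\delta$ (or into $\epsilon$) yields the stated $(1-\epsilon)$ form. I would double-check the exact constants so that the factor comes out to be exactly $1-\epsilon$ as written, adjusting $\delta$ if necessary.

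The main obstacle is the last step: the concavity-plus-termination argument only naturally gives an \emph{additive} error proportional to $M$, whereas the lemma is stated multiplicatively. The fix is the lower bound $F^{exp}(\bx^*) = \Theta(M)$, which must be justified carefully — one needs that the optimum over $P$ is comparable to the largest singleton value, which is true because a single item can always be allocated and $F^{exp}$ loses at most a $1-1/e$ factor relative to $F$. A secondary point to handle cleanly is the "with high probability" clause: the bound on $(\bx^* - \bx)\cdot(\bg - \nabla F^{exp}(\bx))$ relies on the gradient estimate being accurate to within $\delta M$ per coordinate simultaneously at the (single) final point, which holds with high probability by the Chernoff bound already invoked in the algorithm description, uniformly over the polynomially many iterations.
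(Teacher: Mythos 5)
Your proof proposal follows essentially the same route as the paper's: apply the first-order concavity bound $F^{exp}(\bx^*) - F^{exp}(\bx) \leq \nabla F^{exp}(\bx)\cdot(\bx^*-\bx)$ at the final iterate, plug in the termination condition for $\by = \bx^*$, account for the $\delta M$ per-coordinate gradient error (contributing at most $mn\,\delta M$), and conclude an additive slack of about $\epsilon M$. The only substantive difference is in the last step. The paper simply writes $\epsilon M \leq \epsilon\, OPT$, i.e.\ it implicitly asserts $M \leq \max_{\bx \in P} F^{exp}(\bx)$. You correctly flag that this is not quite right: putting the best singleton coordinate at $x_{ij}=1$ only gives $F^{exp}$-value $(1-e^{-1})v_i(\{j\})$, so all one gets for free is $\max_P F^{exp} \geq (1-1/e)M$, and in the single-bidder-single-item case this bound is tight so $\max_P F^{exp}$ can genuinely be smaller than $M$. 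Thus the paper's proof as written has a harmless but real constant-factor slip, which you identify and repair by absorbing the $e/(e-1)$ factor into the choice of $\epsilon$ (equivalently, one could shrink the termination threshold or $\delta$ by that constant). This is a valid and in fact more careful proof than the paper's own; the paper's conclusion stands because the factor is constant, but your version spells out why.
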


\begin{proof}
Let $\by$ be an optimal solution of $\max \{ F^{exp}(\bx): \bx \in P \}$.
When the algorithm terminates at $\bx$, we have $(\by-\bx) \cdot
\nabla F^{exp}(\bx) < \epsilon M$ (even accounting for the errors in
our estimate of $\nabla F^{exp}(\bx)$). By the concavity of $F^{exp}$,
$$ OPT - F^{exp}(\bx) = F^{exp}(\by) - F^{exp}(\bx) \leq (\by - \bx)
\cdot \nabla F^{exp}(\bx) \leq \epsilon M \leq \epsilon OPT.$$ 
\end{proof}

\begin{lemma}
In each iteration, with high probability, the value of $F^{exp}(\bx)$
increases by at least $\frac{\epsilon^2}{64 m^2 n^2}  M$. 
\end{lemma}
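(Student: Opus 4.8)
\emph{Proof plan.} The plan is to bound the one-step increase of $F^{exp}$ by combining (i) the correlation of the step direction $\bd=\by-\bx$ with the \emph{true} gradient $\nabla F^{exp}(\bx)$ and (ii) a uniform bound on the second derivatives of $F^{exp}$, fed into a second-order Taylor expansion along the step. Throughout I would condition on the high-probability event that the gradient estimate satisfies $|g_{ij}-\partial F^{exp}(\bx)/\partial x_{ij}|\le \delta M$ for every coordinate $(i,j)\in[n]\times[m]$; this is exactly the accuracy the sampling procedure delivers, so no further probabilistic argument is needed in this lemma.

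First I would convert the acceptance test into a statement about the true gradient. Since $\bx,\by\in P$ and $P$ is convex, $\bd\in[-1,1]^{[n]\times[m]}$ and the whole segment $\{\bx+t\bd:0\le t\le\delta\}$ lies in $P\subseteq[0,1]^{[n]\times[m]}$. The test gives $\bd\cdot\bg>\frac12\epsilon M$; since $\bd$ has $nm$ coordinates of absolute value at most $1$ and each coordinate of $\bg$ is within $\delta M$ of the truth,
$$\bd\cdot\nabla F^{exp}(\bx)\ >\ \tfrac12\epsilon M-nm\,\delta M\ =\ \tfrac12\epsilon M-\tfrac{\epsilon}{8mn}M\ \ge\ \tfrac14\epsilon M .$$

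Next I would establish a uniform bound on the Hessian $H(\bz):=\nabla^2F^{exp}(\bz)$: every entry has absolute value at most $M$ for $\bz\in[0,1]^{[n]\times[m]}$. Writing $F^{exp}(\bz)=F(\bw)$ with $w_{ij}=1-e^{-z_{ij}}$, the chain rule gives $\partial^2F^{exp}/\partial z_{ij}\partial z_{kl}=e^{-z_{ij}}e^{-z_{kl}}\,\partial^2F/\partial w_{ij}\partial w_{kl}$ for $(i,j)\neq(k,l)$, and $\partial^2F^{exp}/\partial z_{ij}^2=-e^{-z_{ij}}\,\partial F/\partial w_{ij}$, using that $F$ is multilinear (its pure second derivatives vanish). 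Here $e^{-z_{ij}}\in(0,1]$; the first partial $\partial F/\partial w_{ij}$ is the expected marginal value of assigning item $j$ to bidder $i$, hence lies in $[0,M]$ by monotonicity and the definition of $M$ (via submodularity of $v_i$ and $v_i(\emptyset)=0$); and each mixed partial of $F$ is a difference of two such expected marginals, hence lies in $[-M,M]$. In all cases $|H_{(ij)(kl)}(\bz)|\le M$, so for our $\bd$ and any $\bz$ on the segment $|\bd^{T}H(\bz)\,\bd|\le\sum_{(i,j),(k,l)}|d_{ij}|\,|d_{kl}|\,|H_{(ij)(kl)}(\bz)|\le (nm)^2M$. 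I expect this to be the only genuinely substantive step: the subtlety is that although $F$ has no pure second derivatives, the reparametrization $z\mapsto 1-e^{-z}$ reintroduces the diagonal term $-e^{-z_{ij}}\,\partial F/\partial w_{ij}$, and one must check that this term (and the factors $e^{-z}\le 1$, which only help) keeps every Hessian entry in $[-M,M]$.

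Finally I would apply Taylor's theorem with Lagrange remainder to $t\mapsto F^{exp}(\bx+t\bd)$ on $[0,\delta]$: for some $\xi$ on the segment,
$$F^{exp}(\bx+\delta\bd)-F^{exp}(\bx)=\delta\,\bd\cdot\nabla F^{exp}(\bx)+\tfrac{\delta^2}{2}\,\bd^{T}H(\xi)\,\bd\ \ge\ \tfrac14\delta\epsilon M-\tfrac12\delta^2(nm)^2M .$$
Substituting $\delta=\epsilon/(8m^2n^2)$ and simplifying gives $\frac{\epsilon^2M}{32m^2n^2}-\frac{\epsilon^2M}{128m^2n^2}=\frac{3\epsilon^2M}{128m^2n^2}\ge\frac{\epsilon^2}{64m^2n^2}M$, the claimed increase. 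The remainder is routine arithmetic together with the elementary error accounting of the first step.
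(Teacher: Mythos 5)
Your proof is correct and takes essentially the same approach as the paper: convert the acceptance test into a bound on the true directional derivative, bound every Hessian entry of $F^{exp}$ by $M$, and control the second-order error over a step of length $\delta$. The only cosmetic difference is that you invoke Taylor's theorem with Lagrange remainder (gaining a harmless factor of $\tfrac12$ in the error term) whereas the paper instead bounds how much the gradient can drift along the segment and integrates; these are the same estimate in two guises.
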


\begin{proof}
If the algorithm continues, we can assume that $(\by - \bx) \cdot
\nabla F^{exp}(\bx) > \frac14 \epsilon M$ 
(considering that the estimate of $\nabla F^{exp}(\bx)$ could be off
by $\delta M = \frac{\epsilon M}{8 m^2 n^2}$ in each coordinate). 
We also have bounds on how much the gradient can change when $\bx$
moves by a certain amount.
Specifically, for $(i,j) \neq (i',j')$,
$$ \Big| \mixdiff{F^{exp}}{x_{ij}}{x_{i'j'}}\Big| = \Big|
 e^{-x_{ij}-x_{i'j'}} \mixdiff{F}{x_{ij}}{x_{i'j'}} \Big| 
 \leq \Big|\mixdiff{F}{x_{ij}}{x_{i'j'}}\Big| \leq M
$$
and similarly
$$ \Big| \secdiff{F^{exp}}{x_{ij}} \Big| = \Big| e^{-x_{ij}}
\partdiff{F}{x_{ij}} \Big| \leq \Big| \partdiff{F}{x_{ij}} \Big| \leq
M,$$ 
by known properties of the multilinear extension \cite{V08,V09}. This
implies that for any $\bx'$ such that 
$|| \bx' - \bx ||_\infty \leq \delta$,
$$ \partdiff{F^{exp}}{x_{ij}} \Big|_{\bx'} \geq
\partdiff{F^{exp}}{x_{ij}} \Big|_{\bx} - \sum_{i,j} |x'_{ij} - x_{ij}|
\max \Big| \mixdiff{F^{exp}}{x_{ij}}{x_{i'j'}} \Big| 
\geq \partdiff{F^{exp}}{x_{ij}} \Big|_{\bx} - \delta m n M.$$
Consequently,
\begin{eqnarray*}
 F^{exp}(\bx + \delta (\by-\bx)) & \geq & F^{exp}(\bx) + \delta
 (\by-\bx) \cdot (\nabla F^{exp}(\bx) - \delta m n M \b1) \\ 
& \geq & F^{exp}(\bx) + \delta (\by-\bx) \cdot \nabla F^{exp}(\bx) -
 \delta^2 m^2 n^2 M \\ 
& \geq & F^{exp}(\bx) + \delta \cdot \frac14 \epsilon M - \delta^2 m^2
 n^2 M. 
\end{eqnarray*}
Again using $\delta = \frac{\epsilon}{8 m^2 n^2}$, we obtain 
$$ F^{exp}(\bx + \delta (\by-\bx)) \geq F^{exp}(\bx) +
 \frac{\epsilon^2}{32 m^2 n^2} M - \frac{\epsilon^2}{64 m^2 n^2} M 
 \geq F^{exp}(\bx) + \frac{\epsilon^2}{64 m^2 n^2}  M.$$
\end{proof}

\begin{lemma}
The number of iterations is with high probability at most $64 m^3 n^2
/ \epsilon^2$. 
\end{lemma}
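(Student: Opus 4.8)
The plan is to turn the per-iteration progress bound of the preceding lemma into a bound on the number of iterations by combining it with an \emph{a priori} upper bound on $F^{exp}$ over the polytope $P$, and then dividing. At initialization $\bx = 0$, so $F^{exp}(\bx) = F(0,\ldots,0) = f(\emptyset) = 0$, the valuations being normalized so that $v_i(\emptyset) = 0$; and by the preceding lemma, each update increases $F^{exp}(\bx)$ by at least $\frac{\epsilon^2}{64 m^2 n^2} M$, with high probability in that iteration. So it remains to show that $F^{exp}(\bx)$ cannot exceed $mM$ on $P$.

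For the upper bound, I would fix $\bx \in P$ and let $\bz$ be defined by $z_{ij} = 1 - e^{-x_{ij}}$. Since $1 - e^{-t} \le t$, we have $z_{ij} \le x_{ij}$ and hence $\sum_i z_{ij} \le \sum_i x_{ij} \le 1$, so $\bz \in P$ and $F^{exp}(\bx) = F(\bz)$. By the definition of the multilinear extension, $F(\bz) = \sum_i \E[v_i(\pi_i(\hat\bz))]$, and subadditivity of each $v_i$ (a consequence of submodularity together with $v_i(\emptyset)=0$) gives $\E[v_i(\pi_i(\hat\bz))] \le \sum_j \Pr[(i,j) \in \hat\bz]\, v_i(\{j\}) = \sum_j z_{ij} v_i(\{j\}) \le M \sum_j z_{ij}$. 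Summing over $i$, $F(\bz) \le M \sum_j \sum_i z_{ij} \le M \sum_j 1 = mM$. (Alternatively one could note $\max_{\bx \in P} F^{exp}(\bx) \le OPT$ and that any allocation partitions at most $m$ items among the bidders, so $OPT \le mM$; the argument above is self-contained.) Dividing $mM$ by the per-step increment $\frac{\epsilon^2}{64 m^2 n^2} M$ yields the claimed bound $\frac{64 m^3 n^2}{\epsilon^2}$ on the number of iterations.

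To make the ``with high probability'' claim precise without circularity, I would set $T := \lceil 64 m^3 n^2 / \epsilon^2 \rceil + 1$ and condition on the event ${\cal E}$ that, in each of the first $T$ iterations, the gradient estimate is accurate to within $\delta M$ per coordinate. Each such estimate fails with probability $e^{-poly(m,n)}$ and $T = poly(m,n)$, so a union bound gives $\Pr[{\cal E}] \ge 1 - e^{-poly(m,n)}$. On ${\cal E}$ the preceding lemma applies to each of the first $T$ iterations, so were the algorithm still running after $T$ iterations, $F^{exp}(\bx)$ would have grown from $0$ by more than $T \cdot \frac{\epsilon^2}{64 m^2 n^2} M > mM$, contradicting the upper bound; hence the algorithm terminates within $T$ iterations. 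I do not expect a genuine obstacle here; the only points requiring care are using the column constraint of $P$ to get the bound $mM$ (rather than the cruder $F(\b1) = \sum_i v_i([m]) \le nmM$, which would lose a factor of $n$ relative to the stated bound) and phrasing the union bound so that the iteration count and the high-probability event are not circularly defined in terms of one another.
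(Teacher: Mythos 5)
Your proposal is correct and follows essentially the same route as the paper: start at $F^{exp}(\mathbf{0})=0$, use the preceding lemma's per-iteration gain of $\frac{\epsilon^2}{64 m^2 n^2}M$, and cap $F^{exp}$ by $mM$ on $P$, then divide. Your treatment is a bit more careful than the paper's terse version in two respects worth noting: you actually derive $F^{exp}(\bx)\le mM$ via $\bz = \b1 - e^{-\bx}\in P$ and subadditivity (the paper merely asserts it from ``$mM$ bounds the welfare of every feasible allocation,'' which is slightly loose since $F(\bz)$ for fractional $\bz\in P$ averages over independently-rounded, not necessarily feasible, sets), and you make the union bound over the first $T$ iterations explicit to avoid circularity in the high-probability statement, a step the paper glosses over.
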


\begin{proof}
By the previous lemma,  with high probability
the value of $F^{exp}(\bx)$ increases in each
iteration by at least $\frac{\epsilon^2}{64 m^2 n^2}  M$.
After $64 m^3 n^2 / \epsilon^2$ iterations, it will be at least $m
M$. By the definition of~$M$ and submodularity of valuations, $mM$ is
an upper bound on the welfare of every feasible allocation, and hence
also of the function~$F^{exp}$.  This completes the proof.
\end{proof}

This concludes the proof of Theorem~\ref{thm:approx-MIDR}. 
We remark that, building on the allocation rule in \cite{Dughmi11},
a similar approach gives a $(1-\epsilon)$-MIDR and
$(1-1/e-\epsilon)$-approximate allocation rule for combinatorial
public projects with weighted matroid rank sums that runs in
polynomial time in the value oracle model.

\section{From Approximately MIDR to Approximately TIE}
\label{se:MIDR->TIE}

In this section, we prove Theorem~\ref{thm:MIDR->TIE}. We assume that
we have a $(1-\epsilon)$-MIDR allocation rule $\cal M$ providing a
$c$-approximation for combinatorial auctions with valuations in a
class $\cal C$. We assume in the following that $c \geq \frac{1}{n}$,
where $n$ is the number of bidders. (A $\frac{1}{n}$-approximation is
trivial to achieve by giving all of the items to a random bidder.)
We also assume that $\epsilon = 1 / poly(m,n)$.
We want to convert the $(1-\epsilon)$-MIDR allocation rule into an
$(1-\epsilon')$-TIE mechanism. Our approach is as follows. 
If $\epsilon=0$, then the VCG payment scheme turns an MIDR mechanism
into a TIE mechanism. The fact that our mechanism is only
approximately MIDR means that the VCG payments might suffer from
errors that are significant for certain bidders, especially if
their utility is close to zero. Therefore, we modify the mechanism to
ensure that bidders whose valuation is very low do not participate in
the VCG scheme. In addition, we provide each bidder with the bundle of
{\em all} items with some small probability, so that their expected
utility is not extremely small. Our mechanism works as follows.

\paragraph{Mechanism ${\cal M}'$.}
\begin{enumerate}

\item Let $\cal M$ be a $(1-\epsilon)$-MIDR allocation rule.

\item Let $V_i$ be the valuation that bidder $i$ reports for the
  ground set of all items. 
Run $\cal M$ to compute a distribution over allocations
  $(S_1,\ldots,S_n)$ and let $O_i$ be an (unbiased) estimate of the
  expected value collected by bidder $i$, $\E[v_i(S_i)]$.\footnote{By
  polynomially bounded sampling, we can assume that our estimate $O_i$
  is with high probability within $\E[v_i(S_i)] \pm V_i / poly(m,n)$,
  and the probability of deviation decays exponentially. Similarly for
  the estimates of $O$ and $O'_{-i}$.} 
Let $O = \sum_{i=1}^{n} O_i$ be an estimate of the expected social
  welfare $\sum_{i=1}^{n} \E[v_i(S_i)]$. For each bidder $i$, run
  $\cal M$ also on the same instance without bidder $i$, and denote by
  $O'_{-i}$ an estimate of the expected social welfare of its
  outcome. 

\item Call bidder $i$ {\em relevant} if  $$V_i > \frac{1}{n^7} \sum_{j \neq
  i} V_j.$$ 

Call bidder $i$ {\em active} if he is relevant and, in addition,
$$\left(1-\frac{1}{n}\right) (O - O'_{-i}) + \frac{1}{2n^2} V_i >
\frac{1}{n^4} O'_{-i}.$$ 

\item With probability $1-1/n$: allocate a set $S_i$ from the
  distribution found by $\cal M$ to each active bidder $i$, and charge
  the VCG-like price $p_i = O'_{-i} - \sum_{j \neq i} O_j$. 
Do not allocate or charge anything to inactive bidders.

\item Else, with probability $1/n^2$ for each bidder $i$: If active,
  allocate the ground set to~$i$
and charge $\frac{1}{n^2} O'_{-i}$. If inactive, allocate the ground
  set with probability $1/2$ and charge $0$. 

\end{enumerate}

We emphasize that $O$ and $O'_{-i}$ are random variables
that we obtain by running the (randomized) allocation rule $\cal M$. 
We denote the actual optima over the range, with respect to the
reported valuations, by $OPT$ and $OPT'_{-i}$. 
In expectation, we have $OPT \geq \E[O] \geq (1-\epsilon) OPT$ and
$OPT'_{-i} \geq \E[O'_{-i}] \geq (1-\epsilon) OPT'_{-i}$; however with
some probability, $O$ could be significantly different from $OPT$
(even larger, since it is a probabilistic estimate), and 
$O'_{-i}$ could be significantly different from $OPT'_{-i}$.

In the following, we denote by $v^*_i$ the actual valuation of
bidder $i$, and by $V^*_i = v^*_i(M)$ the actual value of the ground set for bidder $i$.
Let $OPT_{+i}$ denote the optimum over the range with valuations $v_j$ for $j \neq i$ 
and $v^*_i$ for bidder $i$. (Note that $OPT_{+i} = OPT$ if $v^*_i = v_i$.)
Let $O_{+i}$ denote our estimate of $OPT_{+i}$ (assuming bidder $i$ reports the truth).
We prove the following statements. 

\begin{lemma}
\label{lem:small-bidder}
For every bidder such that $V^*_i \leq \frac{1}{n^7} \sum_{j \neq i}
V_j$, his expected utility is maximized 
within an $\epsilon$-fraction of his utility
by reporting truthfully.
\end{lemma}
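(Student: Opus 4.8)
The plan is to prove something slightly stronger than the statement: for such a bidder $i$, reporting truthfully is in fact (essentially) optimal, the only possible loss coming from the $e^{-poly(m,n)}$-probability events on which the sampling estimates are far off. First I would compute the truthful utility exactly. Since $V^*_i \le \frac{1}{n^7}\sum_{j\neq i}V_j$, reporting truthfully gives $V_i = V^*_i$, so $i$ fails the relevance test and is inactive; an inactive bidder gets and pays nothing in Step~4, and is handled in Step~5 with overall probability $\frac{1}{n^2}$, in which case he receives the ground set with probability $\frac12$ and is charged nothing. Hence the truthful utility is exactly $\frac{V^*_i}{2n^2}$.

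Now fix an arbitrary report $v'_i$ (the others' reports $v_{-i}$ are fixed), write $\Sigma = \sum_{j\neq i}V_j$, and note that $OPT'_{-i}$ and $O'_{-i}$ do not depend on $v'_i$. I would condition on the coin-determined event ``$i$ is active'' and call its probability $A$; on the complementary event $i$ is inactive and contributes exactly $\frac{V^*_i}{2n^2}$, just as in the truthful case. On the active event I bound the two branches. In Step~4, $i$ is given a set $S_i$ with $v^*_i(S_i)\le V^*_i$ and pays $p_i = O'_{-i}-\sum_{j\neq i}O_j$; since $O'_{-i}$ estimates $OPT'_{-i}$, which is at least the value $\E_{D^*}[\sum_{j\neq i}v_j]$ collected by the others under $\cM$'s distribution $D^*$ (removing a bidder keeps a feasible distribution feasible for the smaller instance), which in turn is estimated by $\sum_{j\neq i}O_j$, we get $p_i \ge -O(\Sigma/poly(m,n))$ with high probability --- and if instead $i$ ``hogs'' items, $p_i$ is large and positive, which only strengthens the bound. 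In Step~5 (probability $\frac1{n^2}$ for bidder $i$) an active bidder is charged $\frac1{n^2}O'_{-i}$ for the ground set, worth at most $V^*_i$ to him, and $O'_{-i}\ge\frac12 OPT'_{-i}\ge\frac12\max_{j\neq i}V_j\ge\frac{\Sigma}{2n}$ with high probability. Collecting terms, the expected utility of $v'_i$ is at most
\[
\frac{V^*_i}{2n^2} + A\left[\left(1-\tfrac1n\right)\!\left(V^*_i+O\!\left(\tfrac{\Sigma}{poly}\right)\right) + \tfrac1{n^2}\!\left(V^*_i-\tfrac{\Sigma}{2n^3}\right) - \tfrac{V^*_i}{2n^2}\right].
\]
Invoking the hypothesis $\Sigma\ge n^7 V^*_i$, the Step~5 term $-\frac{1}{n^2}\cdot\frac{\Sigma}{2n^3} = -\frac{\Sigma}{2n^5} \le -\frac{n^2}{2}V^*_i$ more than absorbs both the $O(\Sigma/poly)$ sampling error (choosing the sampling precision so that $poly(m,n)$ is a sufficiently large fixed polynomial, say $\ge 8n^5$) and the $O(V^*_i)$ worth of positive terms; thus the bracket is negative, and the expected utility of $v'_i$ is at most $\frac{V^*_i}{2n^2}$, i.e.\ the truthful utility. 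The failure events of the estimates contribute only an exponentially small additive amount, which is where the $\epsilon$-slack of the statement is spent.

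The hard part is this last step, and specifically its interaction with the sampling error: bidder $i$ can report an astronomically large $V_i$, and Chernoff bounds make all estimation errors scale with the value magnitudes $V_j$ rather than with $i$'s tiny true value $V^*_i$, so a priori the error alone could dwarf the truthful utility. The key point that rescues the argument is that the forced Step~5 charge $\frac1{n^2}O'_{-i}$ is \emph{independent of $i$'s report} and is already of order $\frac1n\sum_{j\neq i}V_j\ge n^6 V^*_i$: it scales with the same magnitudes but carries extra powers of $n$, so once the fixed sampling polynomial is large enough this penalty provably beats both the error and the Step~4 gain (the latter capped at $V^*_i$ by the standard VCG-externality inequality $\E[v^*_i(S_i)-p_i]\le OPT_{+i}-OPT'_{-i}\le V^*_i$).
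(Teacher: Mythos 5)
Your proof is correct and takes essentially the same route as the paper: show the bidder is inactive when truthful (hence has utility exactly $\frac{V_i^*}{2n^2}$), observe that he can change his fate only by inflating $V_i$ to become active, and then argue that the forced Step-5 charge $\frac{1}{n^2}O'_{-i}$, incurred with probability $\frac{1}{n^2}$, dwarfs any possible gain because $O'_{-i}$ scales with $\Sigma = \sum_{j\neq i}V_j \geq n^7 V^*_i$ while his collectible value is only $V^*_i$.

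Two small notes. First, you are in fact a bit more careful than the paper: the paper's sentence ``he would gain from being active only if $O'_{-i} < n^4 V^*_i$'' implicitly treats the Step-4 payment $p_i = O'_{-i} - \sum_{j\neq i}O_j$ as nonnegative, whereas under $(1-\epsilon)$-MIDR plus sampling error $p_i$ can be negative by as much as roughly $\epsilon\,OPT'_{-i} + O(\Sigma/\mathrm{poly})$; you explicitly bound this and check that the Step-5 penalty still dominates, which does require $\epsilon$ to be a sufficiently small inverse polynomial (harmless, since Theorem~\ref{thm:MIDR->TIE} lets us shrink the MIDR $\epsilon$). Second, a minor exponent slip: the general hypothesis only gives $c \ge 1/n$, so $OPT'_{-i} \ge \frac{1}{n}\max_{j\ne i}V_j \ge \frac{\Sigma}{n^2}$, not $OPT'_{-i} \ge \max_{j\ne i}V_j$; your Step-5 lower bound should therefore read $\frac{1}{n^2}O'_{-i} \gtrsim \frac{\Sigma}{2n^4}$ rather than $\frac{\Sigma}{2n^3}$. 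This changes nothing downstream since you have $n^7$ worth of slack from the relevance threshold.
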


\begin{proof}
Observe that as long as bidder $i$ reports $V_i \leq \frac{1}{n^7}
\sum_{j \neq i} V_j$, he is inactive and receives the same utility
regardless of his bid. Therefore, his utility could change only if he
reports 
$V_i > \frac{1}{n^7} \sum_{j \neq i} V_j$. In that case, he might be
classified as active (depending on 
$O$ and $O'_{-i}$). However, if that happens then he is charged at
least $\frac{1}{n^2} O'_{-i}$ with probability $\frac{1}{n^2}$,
i.e. $\frac{1}{n^4} O'_{-i}$ in expectation (conditioned on the value
of $O'_{i}$). 
Since the most value he can ever collect is $V^*_i$, he would (possibly)
gain from being active only if 
$O'_{-i} < n^4 V^*_i \leq \frac{1}{n^3} \sum_{j \neq i} V_j$. Since
$\E[O'_{-i}] \geq (1-\epsilon) OPT'_{-i} 
 > \frac{1}{n^2} \sum_{j \neq i} V_j$, it is very unlikely that
 $O'_{-i}$ is less than 
  $\frac{1}{n^3} \sum_{j \neq i} V_j$; 
this happens with exponentially small probability. Hence, by lying,
bidder $i$ could possibly gain only an exponentially small fraction of
$V^*_i$ in expectation, negligible with respect to his utility as an
inactive player. 
\end{proof}

\begin{lemma}
\label{lem:err-bound}
For every bidder $i$ such that $V^*_i > \frac{1}{n^7} \sum_{j \neq i} V_j$, we have 
$$ \E[|OPT_{+i} - O_{+i}|] \leq 3 \epsilon n^7 V^*_i $$
and
$$ \E[|OPT'_{-i} - O'_{-i}|] \leq 2 \epsilon n^7 V^*_i.$$
\end{lemma}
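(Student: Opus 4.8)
The goal is to bound the expected additive error between our sampled estimates and the true optima over the range, in terms of $V^*_i$. I would proceed in three steps.

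\medskip

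\textbf{Step 1: Relate $V^*_i$ and the $V_j$'s to the optima over the range.} First I would observe that since bidder $i$ is relevant ($V^*_i > \frac{1}{n^7}\sum_{j\neq i} V_j$, assuming truthful reporting), we have $\sum_{j\neq i} V_j < n^7 V^*_i$, and hence $\sum_j V_j^{(+i)} \leq (n^7+1) V^*_i \leq 2n^7 V^*_i$ for the instance where $i$ reports truthfully (and similarly $\sum_{j\neq i} V_j < n^7 V^*_i$ for the instance without $i$). Since every feasible allocation has social welfare at most $\sum_j V_j$ when valuations are subadditive (each bidder's value for the received bundle is at most its value for the ground set), this gives $OPT_{+i} \leq 2n^7 V^*_i$ and $OPT'_{-i} \leq n^7 V^*_i$. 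The same bound applies to any realized allocation, so both the true optima and the per-run welfare values are deterministically bounded by $O(n^7 V^*_i)$.

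\medskip

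\textbf{Step 2: Decompose the error into sampling error and MIDR error.} Write $O_{+i} = \sum_{j} O_j$ where each $O_j$ estimates $\E[v_j(S_j)]$ for the run on the instance with $i$ truthful. I would split
$$ |OPT_{+i} - O_{+i}| \leq |OPT_{+i} - \E[O_{+i}]| + |\E[O_{+i}] - O_{+i}|. $$
The first term is a deterministic quantity: since $\cal M$ is $(1-\epsilon)$-MIDR, $\E[\text{welfare}]$ is within a $(1-\epsilon)$ factor of $OPT_{+i}$, so the first term is at most $\epsilon \cdot OPT_{+i} \leq 2\epsilon n^7 V^*_i$ by Step 1. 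The second term is pure sampling error: by the footnote, each $O_j$ is within $\E[v_j(S_j)] \pm V_j/poly(m,n)$ with high probability (and the deviation has exponentially decaying tails), so summing over the $n$ bidders and using $\sum_j V_j \leq 2n^7 V^*_i$, the expected value of the second term is at most (something like) $\frac{2n^7 V^*_i}{poly(m,n)}$, which — choosing the $poly$ in the sampling appropriately — is at most $\epsilon n^7 V^*_i$; the exponentially small tail contributes a negligible additional term since welfare is bounded by $2n^7 V^*_i$ even in the bad event. Combining the two terms gives $\E[|OPT_{+i}-O_{+i}|] \leq 3\epsilon n^7 V^*_i$, and the analogous computation with $OPT'_{-i} \leq n^7 V^*_i$ gives the factor $2$ for the second inequality.

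\medskip

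\textbf{Main obstacle.} The step I expect to require the most care is controlling the sampling error \emph{in expectation} rather than merely with high probability: the estimate $O_j$ could in principle be large in a low-probability event, so I need the exponential tail decay from the footnote to ensure that the contribution of the bad event to $\E[|\E[O_{+i}] - O_{+i}|]$ is dominated (here the deterministic bound $O(n^7 V^*_i)$ on all quantities from Step 1 is exactly what makes the bad event harmless). One must also be slightly careful that the footnote's error guarantee is stated relative to $V_i$, the \emph{reported} value of the ground set; since we are in the case $V_i = V^*_i$ (bidder $i$ truthful) for $O_{+i}$, and since relevance bounds all the other $V_j$ in terms of $V^*_i$, this causes no trouble, but it should be spelled out. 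The rest is bookkeeping with the bound from Step 1.
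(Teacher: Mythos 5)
Your proof is correct and follows essentially the same route as the paper: bound the true optima $OPT'_{-i}$ and $OPT_{+i}$ by $O(n^7 V^*_i)$ using the relevance condition, split the error via triangle inequality into the $(1-\epsilon)$-MIDR gap (at most $\epsilon \cdot OPT$) plus the sampling error of the estimate, and absorb the latter via concentration. The paper's version is terser (it simply asserts that $O'_{-i}$ and $O_{+i}$ are concentrated around their expectations with small variance) whereas you spell out the tail-handling and the dependence on the reported $V_j$'s more carefully, which is a welcome expansion rather than a deviation; the only cosmetic quibble is that your Step 1 appeals to ``subadditivity'' where what you actually use is monotonicity ($v_j(S_j)\le v_j([m])=V_j$).
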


\begin{proof}
We have $V^*_i > \frac{1}{n^7} \sum_{j \neq i} V_j  \geq \frac{1}{n^7} OPT'_{-i}$.
We also know that $|OPT'_{-i} - \E[O'_{-i}]| \leq \epsilon OPT'_{-i} \leq \epsilon n^7 V^*_i$. 
The estimate $O'_{-i}$ of the output of the mechanism for all bidders except $i$
is concentrated around its expectation $\E[O'_{-i}]$, with variance $\frac{1}{poly(m,n)} \sum_{j \neq i} V_j
 \ll \epsilon n^7 V^*_i$, hence we can estimate
$$ \E[|OPT'_{-i} - O'_{-i}|] \leq 2 \epsilon n^7 V^*_i.$$

Similarly, $|OPT_{+i} - \E[O_{+i}]| \leq \epsilon OPT_{+i} \leq \epsilon (OPT_{-i} + V^*_i)
 \leq 2 \epsilon n^7 V^*_i$, and $O_{+i}$ is concentrated around its expectation,
therefore $\E[|OPT_{+i} - O_{+i}|] \leq 3 \epsilon n^7 V^*_i$.
\end{proof}

\begin{lemma}
\label{lem:big-bidder}
Every bidder~$i$ such that $V^*_i > \frac{1}{n^7} \sum_{j \neq i} V_j$
maximizes his expected utility within a factor of $(1-O(\epsilon n^9))$ by reporting his true valuation. 
\end{lemma}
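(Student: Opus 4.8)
The plan is to use the textbook VCG ``externality'' analysis as the scaffold and then account for the two ways in which ${\cal M}'$ departs from an exact VCG-on-MIDR mechanism: the $(1-\epsilon)$ loss in the MIDR guarantee, and the fact that the payments are built from the sampled estimates $O,O'_{-i},O_j$ rather than the true optima $OPT,OPT'_{-i}$. The relevance test, the activity test, and the ground-set gift are exactly the devices that keep these two effects from swamping bidder $i$'s utility, so the heart of the proof is to show (a) that truth-telling is near-optimal up to an additive $O(\epsilon n^7 V^*_i)$, and (b) that bidder $i$'s utility under any report worth considering is bounded below by $\Omega(V^*_i/poly(n))$; together these give the multiplicative guarantee.

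For (a), fix a deviation $v'_i$. If $i$ has the same activity status under $v^*_i$ and under $v'_i$, then the step-5 part of his utility is the same in both cases --- it depends only on his status, on $V^*_i$, and on $O'_{-i}$, and $O'_{-i}$ is obtained by running ${\cal M}$ \emph{without} bidder $i$ and hence does not depend on $i$'s report --- so it cancels in the comparison and only step 4 remains. There the usual VCG argument applies: by the definition of $p_i$, bidder $i$'s step-4 utility is $(1-\tfrac1n)(W-\E[O'_{-i}])$, where $W$ is the expected social welfare of the allocation ${\cal M}$ returns (on $(v^*_i,v_{-i})$ under truth, on $(v'_i,v_{-i})$ under the lie) with $i$'s bundle valued at his \emph{true} valuation $v^*_i$; since in both cases this allocation is drawn from a distribution in the fixed range ${\cal R}$ and ${\cal M}$ is $(1-\epsilon)$-MIDR, $W$ is at least $(1-\epsilon)OPT_{+i}$ under truth and at most $OPT_{+i}$ under the lie, so --- using $OPT_{+i}\le OPT'_{-i}+V^*_i\le 2n^7 V^*_i$ and Lemma~\ref{lem:err-bound} to absorb the sampling error --- the two step-4 utilities differ by at most $O(\epsilon n^7 V^*_i)$. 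If $v'_i$ instead changes $i$'s status, one argues as in Lemma~\ref{lem:small-bidder}: a report making $i$ inactive (or irrelevant) leaves him with the fixed gift utility $\frac{V^*_i}{2n^3}$, and a report turning $i$ from inactive-under-$v^*_i$ into active can exploit the gap between the inflated report and the true value only up to the error terms, since failure of the activity inequality at $v^*_i$ bounds $OPT_{+i}-OPT'_{-i}$ and the $\frac1{n^2}O'_{-i}$ charge then levied in step 5 offsets whatever remains.

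For (b): if $i$ is inactive under $v^*_i$ this is the equality $U_i(v^*_i)=\frac{V^*_i}{2n^3}$; if $i$ is active, the activity inequality $(1-\tfrac1n)(O_{+i}-O'_{-i})+\tfrac1{2n^2}V^*_i>\tfrac1{n^4}O'_{-i}$ forces $O'_{-i}$ --- and hence the expected step-5 charge $\tfrac1{n^5}\E[O'_{-i}]$ --- to be small enough relative to the VCG gain $(1-\tfrac1n)(\E[O_{+i}]-\E[O'_{-i}])$ (which is $\ge-O(\epsilon n^7 V^*_i)$ because $\E[\sum_{j\ne i}O_j]\le OPT'_{-i}$ and $\E[O'_{-i}]\ge(1-\epsilon)OPT'_{-i}$) and the gift value $\tfrac1{n^3}V^*_i$ that all negative contributions are dominated, giving $U_i(v^*_i)\ge\frac{V^*_i}{2n^3}-O(\epsilon n^7 V^*_i)$. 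Dividing the additive bound of (a) by this lower bound turns $O(\epsilon n^7 V^*_i)$ into a relative error of order $\epsilon\cdot poly(n)$, which one checks is $O(\epsilon n^9)$. I expect the main obstacle to be precisely this bookkeeping behind (a) and (b): showing that in every branch of ${\cal M}'$ --- an unlucky realization of the VCG payment, the $\frac1{n^2}O'_{-i}$ charge of the gift, and a deviation that manipulates the activity or relevance classification --- the possibly-negative pieces are dominated by the positive ones, which is where the particular constants (the $n^7$ of the relevance test, the $n^4$ and $\frac1{2n^2}$ of the activity test, the probabilities $\frac1n$ and $\frac1{n^2}$ of the gift) must be substituted and checked to interlock; everything else is routine estimation, largely already packaged in Lemma~\ref{lem:err-bound} and the footnote concentration bounds.
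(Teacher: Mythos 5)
Your proposal follows essentially the same blueprint as the paper: separate the core VCG externality analysis from the perturbations caused by (i) the $(1-\epsilon)$ slack in the MIDR guarantee and (ii) sampling error in the estimates; bound both by $O(\epsilon n^7 V^*_i)$ using Lemma~\ref{lem:err-bound}; and lower-bound the truthful utility by the step-5 gift so that the additive error converts into a relative one. The paper organizes this as an idealized-mechanism argument (first assume $O_j=\E[v_j(S_j)]$, $O=OPT$, $O'_{-i}=OPT'_{-i}$, show truth-telling then achieves exactly $\max\{U^+_{active},U_{inactive}\}$, and only afterwards add the $O(\epsilon n^7 V^*_i)$ error), while you split by whether the deviation changes the activity status --- a cosmetic difference in organization; the key steps (the VCG externality identity on the active branch, the activity test being precisely the sign of $U^+_{active}-U_{inactive}$, the gift supplying the $\Omega(V^*_i/\mathrm{poly}(n))$ floor) are the same.

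One recurring arithmetic slip: you treat the step-5 gift as arriving with unconditional probability $\frac{1}{n^3}$ per bidder (reading the ``$1/n^2$'' as conditional on the $1/n$ ``else'' branch), which yields an inactive utility of $\frac{V^*_i}{2n^3}$ and an expected step-5 charge of $\frac{1}{n^5}\E[O'_{-i}]$. The paper's own proof takes $1/n^2$ to be the unconditional probability --- it states $U_{inactive}=\frac{1}{2n^2}V^*_i$, and the proof of Lemma~\ref{lem:small-bidder} writes ``charged $\frac{1}{n^2}O'_{-i}$ with probability $\frac{1}{n^2}$, i.e. $\frac{1}{n^4}O'_{-i}$ in expectation.'' With your $\frac{V^*_i}{2n^3}$ floor, dividing $O(\epsilon n^7 V^*_i)$ would give $O(\epsilon n^{10})$, not the $O(\epsilon n^9)$ you assert, so your constants are internally inconsistent; the paper's reading is the one that makes the stated $O(\epsilon n^9)$ come out. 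This is a bookkeeping issue rather than a structural gap --- with the corrected probabilities your argument aligns with the paper's.
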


\begin{proof}
Let us fix the valuations of all bidders except $i$.
Let us assume for now that our estimate $O_j$ is exactly equal to $\E[v_j(S_j)]$,
$O = \sum_{j=1}^{n} O_j$ is equal to $OPT$ (meaning that the MIDR mechanims
optimizes exactly), and similarly $O'_{-i}$ is equal to $OPT'_{-i}$.
We will analyze this idealized mechanism first. 

If bidder $i$ ends up being active, his expected utility will be
\begin{eqnarray*}
U_{active} & = & (1-1/n) (\E[v^*_i(S_i)] - (O'_{-i} - \sum_{j \neq i} O_j)) + \frac{1}{n^2} V^*_i - \frac{1}{n^4} OPT'_{-i} \\ 
& = & (1-1/n) (\E[v^*_i(S_i)] + \sum_{j \neq i} \E[v_j(S_j)] - OPT'_{i}) + \frac{1}{n^2} V^*_i - \frac{1}{n^4} OPT'_{-i} \\ 
& \leq & (1-1/n) (OPT_{+i} - OPT'_{-i}) + \frac{1}{n^2} V^*_i - \frac{1}{n^4} OPT'_{-i} = U^+_{active}.
\end{eqnarray*}
Here we used the fact that $OPT_{+i}$ is the optimal value over the range with valuations $v^*_i$ and $v_j, j \neq i$.
This implies that the last quantity, $U^+_{active}$, is the best possible utility bidder $i$ could receive as an active bidder.
In fact he will receive this utility if he reports truthfully and ends up being active.

If  bidder $i$ is inactive, then his expected utility will be
$$ U_{inactive} = \frac{1}{2n^2} V^*_i.$$

Now, if it is the case that 
$$ U^+_{active} - U_{inactive} =
(1-1/n) (OPT_{+i} - OPT'_{-i}) + \frac{1}{2n^2} V^*_i - \frac{1}{n^4} OPT'_{-i} \leq 0,$$ 
this means that no matter what bidder $i$ reports, being active cannot be more
profitable than not being active for him. When reporting his true valuation, such a bidder
will in fact be inactive,
because the condition for making a bidder active is exactly $(1-1/n) (O - O'_{-i}) + \frac{1}{2n^2} V_i
 - \frac{1}{n^4} O'_{-i} > 0$, and in this case we would have $OPT_{+i} = O$ and $O'_{-i} = OPT'_{-i}$.
Other than making the bidder inactive, the particular valuation he reports doesn't have an impact on his utility,
so he might as well report the truth.

On the other hand, if
$$ U^+_{active} - U_{inactive} = (1-1/n) (OPT_{+i} - OPT'_{-i}) + \frac{1}{2n^2} V^*_i - \frac{1}{n^4} OPT'_{-i} > 0,$$ 
then it is more profitable for bidder $i$ to be active,
since by reporting truthfully he will get utility $U^+_{active}$,
better than $U_{inactive}$ as an inactive bidder.
In fact we argued above that an active bidder cannot get a better utility by reporting
any valuation, so the best strategy for him is to report truthfully. In conclusion,
the idealized mechanism rewards a truthfully reporting bidder by utility $\max \{ U^+_{active}, U_{inactive} \}$
and that's the best the bidder can possibly receive.

Finally, we have to deal with the fact that 
$O = \sum_{j=1}^{n} O_j$ is not exactly equal to $OPT$,
and $O'_{-i}$ is not exactly equal to $OPT'_{-i}$.
By Lemma~\ref{lem:err-bound}, the estimates $O_{+i}$ and $O'_{-i}$ are in expectation within $O(\epsilon n^7 V^*_i)$ of $OPT_{+i}$ and $OPT_{-i}$. The estimates $\sum_{j \neq i} O_j$ of $\sum_{j \neq i} \E[v_j(S_j)]$ are strongly concentrated,
let's say with high probability within $\epsilon \sum_{j \neq i} V_j = O(\epsilon n^7 V^*_i)$ of the expectation.
Also, the actual social welfare of the distribution returned by the mechanism when bidder $i$ reports truthfully is within $O(\epsilon n^7 V^*_i)$ of $OPT_{+i}$ in expectation. 
Therefore, the expected utility of a truthfully reporting bidder is at least $\max \{ U^+_{active}, U_{inactive} \} - O(\epsilon n^7 V^*_i)$. On the other hand, the expected utility under any other reported valuation cannot be better than $\max \{U^+_{active}, U_{inactive}\} + O(\epsilon n^7 V^*_i)$, again due to the precision of the estimates stated above.
We have also ensured that the bidder's utility is at least $\frac{1}{2n^2} V^*_i$.
Therefore, the relative error in utility maximization is $O(\epsilon n^9)$. 
\end{proof}

Now we can prove Theorem~\ref{thm:MIDR->TIE}.

\begin{proof}
Using a $(1-\epsilon)$-MIDR mechanism $\cal M$, we implement a new
mechanism ${\cal M}'$ as above. 
By Lemma~\ref{lem:small-bidder} and \ref{lem:big-bidder}, each bidder
maximizes his utility within 
a factor of $1-O(\epsilon \cdot poly(n))$ by reporting
truthfully. Moreover, the expected social welfare 
provided by mechanism ${\cal M}'$ is at least $(1-1/n)$ times the
social welfare of all {\em active} 
bidders in $\cal M$. (Just considering the option that we used the VCG-based allocation.) 
It remains to estimate the loss in social welfare due to inactive bidders.

Consider a bidder $i$ such that $V_i \geq \frac{4}{n^2} OPT'_{-i}$. Since $OPT'_{-i} \geq \E[O'_{-i}]$,
and $O'_{-i}$ is a strongly concentrated estimate, with high probability we also have
 $V_i \geq \frac{2}{n^2} O'_{-i}$. Therefore, with high probability the bidder will be active and 
participate in the VCG scheme. The only bidders who do not participate with significant probability
are those such that $V_i < \frac{4}{n^2} OPT'_{-i} \leq \frac{4}{n^2} OPT$. Therefore,
all such bidders together cannot amount to more than $\frac{4}{n} OPT$. Overall, we recover
at least a $(1-O(1/n))$-fraction of the social welfare achieved by mechanism $\cal M$,
which means an approximation factor at least $(1-O(1/n)) c$. It is easy to see that
the $O(1/n)$ term can be replaced by any inverse polynomial in $m,n$, if desired.
\end{proof}

\section{\#P-hardness of Lottery Value Queries}
\label{sec:SharpP}

Here we show that for matroid rank functions, lottery-value queries are \#P-hard to answer, and require an exponential number of queries if the matroid is given by an independence oracle. We note that a lottery-value query for the vector $\bx = (\frac12, \ldots, \frac12)$
is simply an expectation over a uniformly random set of elements.

\begin{theorem}
There is a class of succintly represented matroids for which it is \#P-hard to compute $\E[r_\cM(R)]$, where $r_\cM$ is the rank function of $\cM$ and $R$ is a uniformly random set of matroid elements. For matroids given by an independence oracle, computing $\E[r_\cM(R)]$ requires an exponential number of queries.
\end{theorem}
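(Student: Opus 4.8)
The plan is to prove both halves of the statement with one construction: a matroid whose rank function lies just below that of the uniform matroid $U_{n,m}$, with the gap supported exactly on a family of $n$-element sets that encodes a \#P-hard count — the perfect matchings of a bipartite graph. Concretely, given a bipartite graph $G_0$ with parts of size $n$ and edge set $F$, put $m := |F|$, take the ground set to be $F$, and define
$$ r(S) = \min(|S|,n) - \mathbf{1}\bigl[\, S \text{ is a perfect matching of } G_0 \,\bigr]. $$
This matroid is succinctly represented: a set $S$ is dependent iff $|S|\ge n+1$, or $|S|=n$ and $S$ is a perfect matching — testable in time polynomial in $|G_0|$ — so an independence (or rank) oracle is available in polynomial time.

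The one real thing to verify in this step is that $r$ is a matroid rank function. Monotonicity and the unit-increment property $r(S\cup e)-r(S)\in\{0,1\}$ are immediate. For submodularity I would write $r = r_{U_{n,m}} - \phi$ with $r_{U_{n,m}}(S)=\min(|S|,n)$ and $\phi$ the perfect-matching indicator, and check $r(A)+r(B)\ge r(A\cup B)+r(A\cap B)$ by cases. In most cases the generic slack of $r_{U_{n,m}}$ absorbs any deficit of $\phi$; the only case that uses anything about perfect matchings beyond $|A|=n$ is when $A$ and $B$ are two \emph{distinct} perfect matchings, and there the inequality holds exactly because distinct perfect matchings of a bipartite graph differ in at least two edges, so $|A\cap B|\le n-2$ — which is precisely the slack $n-|A\cap B|\ge 2$ needed. (Equivalently: the perfect matchings, being $n$-sets pairwise meeting in $\le n-2$ elements, are the nontrivial hyperplanes of a rank-$n$ paving matroid.)

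For the \#P-hardness: since $r(S)$ coincides with $\min(|S|,n)$ except on the perfect matchings, where it is smaller by exactly $1$,
$$ \E[r(R)] = \E\bigl[\min(|R|,n)\bigr] - 2^{-m}\cdot\bigl(\text{number of perfect matchings of } G_0\bigr), $$
with $R$ uniform over subsets of $F$ and $|R|\sim\mathrm{Bin}(m,1/2)$. The first term equals $\sum_{j=0}^{m}\min(j,n)\binom{m}{j}2^{-m}$ and is computable in polynomial time, so an exact evaluation of the lottery-value quantity $\E[r(R)]$ yields the exact number of perfect matchings of $G_0$, i.e.\ the permanent of a $0/1$ matrix, which is \#P-hard. (An alternative source problem — counting connected spanning subgraphs via the analogous paving matroid on the edge set of a graph, with hyperplanes indexed by spanning trees — works just as well; perfect matchings are simply the most convenient because the required pairwise-intersection condition on hyperplanes is automatic.) For the oracle lower bound I would use the one-hyperplane special case: over a ground set of size $m$ with $n=\lfloor m/2\rfloor$, compare $\cM_0=U_{n,m}$ with the matroid $\cM_1^H$ of rank function $\min(|S|,n)-\mathbf{1}[S=H]$ for a hidden $n$-subset $H$; their rank functions, hence their independence oracles, differ only on the single set $H$, while $\E[r_{\cM_0}(R)]-\E[r_{\cM_1^H}(R)]=2^{-m}>0$. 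Choosing $H$ uniformly among the $\binom{m}{\lfloor m/2\rfloor}=2^{\Omega(m)}$ candidates, any algorithm behaves identically against $\cM_0$ and against $\cM_1^H$ until it happens to query $H$; up to that moment its (possibly adaptive) query sequence is exactly the one it makes against $\cM_0$, so the probability over $H$ that it ever queries $H$ is at most $(\#\text{queries})/\binom{m}{\lfloor m/2\rfloor}$. An algorithm making $2^{o(m)}$ queries thus returns, with probability $1-o(1)$, the same value for $\cM_0$ and for a random $\cM_1^H$, contradicting correctness since the two true values differ; hence exact computation of $\E[r(R)]$ needs $2^{\Omega(m)}$ independence queries.

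I expect the main obstacle to be the submodularity verification in the first step — specifically isolating the one binding case (two distinct perfect matchings) and seeing that the slack $n-|A\cap B|\ge 2$ is exactly what rescues it — together with being careful that both reductions concern \emph{exact} computation, which is what makes the exponentially small factor $2^{-m}$ harmless (the exact rational $\E[r(R)]$ pins down the integer count) and also why the argument does not, and should not, rule out mere approximation of $\E[r(R)]$, which is in fact easy once one can evaluate $r$ greedily.
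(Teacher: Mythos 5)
Your proposal is correct and rests on the same core idea as the paper: encode the perfect matchings of a graph as the circuit-hyperplanes of a (sparse) paving matroid, so that $\E[r_\cM(R)]$ equals an easily computable base term minus $2^{-m}$ times the matching count, and prove the oracle lower bound by a paving-matroid hiding argument. The difference is in how the paving structure is realized. You work directly on the edge set of a bipartite graph, taking the matchings themselves as circuit-hyperplanes; this forces you to verify the sparse-paving condition that distinct hyperplanes intersect in at most $n-2$ elements, which you do correctly (two distinct perfect matchings of a simple graph must differ in at least two edges). The paper instead doubles the ground set to $2m$ elements partitioned into pairs $e_1,\dots,e_m$ and uses circuit-hyperplanes of the form $\bigcup_{i\in F}e_i$; since two such unions of $k$ pairs intersect in at most $2k-2$ elements automatically, the family $\cF$ can be arbitrary. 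Your version is leaner for the matching reduction but less general-purpose; the paper's doubling is a standard device that makes the construction work for any encoded family. Your oracle lower bound is a more explicit rendering of the paper's informal argument: you exhibit $U_{n,m}$ versus a single hidden circuit-hyperplane $H$, note the two independence oracles differ on exactly one set while the two expectations differ by $2^{-m}$, and run a counting adversary over $\binom{m}{\lfloor m/2\rfloor}$ choices of $H$; the paper only says that determining $|\cF|$ forces you to query every candidate set. One minor caveat: your parenthetical alternative using spanning trees as hyperplanes does \emph{not} ``work just as well'' in the undoubled setting, since two spanning trees of an $n$-vertex graph can share $n-2$ edges, violating the required pairwise-intersection bound of $(n-1)-2$; that family genuinely needs the paper's doubling (or an equivalent padding) to form a matroid.
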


\begin{proof}
We use the class of ``paving matroids''~\cite{Oxley}: Let $E$ be a
ground set of size $2m$ partitioned into disjoint pairs $e_1, e_2,
\ldots, e_m$, and let $\cF \subset {[m] \choose k}$ be any family of
$k$-element subsets of $[m]$. Then the following is a matroid: 
$S \subseteq E$ is independent iff either $|S| < 2k$, or $|S| = 2k$
and $S$ is {\em not} a union of pairs $\bigcup_{i \in F} e_i$ 
where $F \in \cF$.

Using this construction, we can embed any \#P-hard problem in a paving
matroid $\cM$. For example, consider the problem of counting perfect
matchings. For a graph $G$ with $m$ edges and $n$ vertices, we let $k
= n/2$ and we define $\cF$ to be the family of $k$-edge subsets of
edges that form a perfect matching. Then the matroid $\cM$ defined as
above captures the structure of perfect matchings, since for any set
of edges $F$ the rank function $r_{\cM}(F)$ tells us whether $F$ is a
perfect matching (which is the case if and only if $r_{\cM}(\bigcup_{i
  \in F} e_i) = 2|F|-1 = 2k-1$). Also, the matroid is succintly
represented by the graph $G$, in the sense that given $G$ we can
easily decide whether a given set is independent in $\cM$ or not. The
value of $r_{\cM}(S)$ depends on the structure of $S$ only if $S$ is a
union of $k$ pairs $e_i$, otherwise it is $r_{\cM}(S) = \min
\{|S|,2k\}$. Therefore, if we can compute the value of $\E[r_{\cM}(R)]
= 2^{-2m} \sum_{S \subseteq [2m]} r_{\cM}(S)$, we can extract the
number of perfect matchings by an elementary formula. 

Similarly, for a paving matroid given by an independence oracle, the
value of $\E[r_{\cM}(R)]$ determines the size of the family $\cF$,
which could be any arbitrary family. We cannot compute this value
unless we determine the size of $\cF$, which requires querying all
sets of the form $\bigcup_{i \in F} e_i$. This requires an exponential
number of queries for independence in $\cM$.
\end{proof}


\begin{thebibliography}{99}

\bibitem{AlonSpencer}
N. Alon and J.H. Spencer.
\newblock The probabilistic method (2nd edition).
\newblock Wiley Interscience, 2000.

\bibitem{DRY11}
S. Dughmi, T. Roughgarden and Q. Yan.
\newblock From convex optimization to randomized mechanisms: towards optimal combinatorial auctions with submodular bidders.
\newblock {\em Proc. of ACM STOC}, 149--158, 2011.

\bibitem{Dughmi11}
S. Dughmi.
\newblock A truthful randomized mechanism for combinatorial public projects via convex optimization.
\newblock {\em Proc. of EC}, 263--272, 2011.

\bibitem{Oxley} J. Oxley.
Matroid Theory, Cambridge University Press, 1992.

\bibitem{V08}
J. Vondr\'ak.
\newblock Optimal approximation for the submodular welfare problem in the value oracle model.
\newblock {\em Proc. of ACM STOC}, 67--74, 2008.

\bibitem{V09}
J. Vondr\'ak.
\newblock Symmetry and approximability of submodular maximization problems.
\newblock {\em Proc. of IEEE FOCS}, 251--270, 2009.


\end{thebibliography}
\end{document}